\PassOptionsToPackage{table}{xcolor}
\documentclass[11pt]{article}

\usepackage{amsmath}
\usepackage{amssymb}
\usepackage{amsthm}
\newtheorem{proposition}{Proposition}
\usepackage{astaple-polyu-template}
\usepackage[round,authoryear]{natbib}
\usepackage{url}
\usepackage{xspace}
\usepackage{graphicx}
\usepackage{wrapfig}
\usepackage{booktabs}
\usepackage{array}
\usepackage{tabularx}
\usepackage{xcolor}
\usepackage{tikz}
\usepackage{enumitem}
\usetikzlibrary{arrows.meta,positioning,calc,fit}
\newcounter{algorithm}
\renewcommand{\thealgorithm}{\arabic{algorithm}}
\definecolor{singedblue}{HTML}{17365D}
\definecolor{singedpale}{HTML}{EAF2FA}
\definecolor{singedred}{HTML}{B64342}
\definecolor{singedredpale}{HTML}{F8E7E5}
\definecolor{singedgreenpale}{HTML}{E5F3EA}
\definecolor{toxicgreen}{HTML}{55A630}
\definecolor{toxicpale}{HTML}{EEF7E8}
\definecolor{singedpurple}{HTML}{5B2C6F}
\definecolor{purplepale}{HTML}{F1EAF4}
\definecolor{gunmetal}{HTML}{343A40}
\definecolor{brass}{HTML}{B08D57}

\newcommand{\singed}{\textsc{SINGED}\xspace}
\newcommand{\singedfullname}{\textbf{S}ource \textbf{I}ntegrity and the
  \textbf{N}onidentifiability \textbf{G}ap in \textbf{E}xecution
  \textbf{D}ecisions for LLM Agents}

\newcommand{\statcell}[4]{%
  \cellcolor{#1}\textbf{\textcolor{#2}{#3}}\,{%
  \scriptsize\textcolor{gunmetal}{(#4)}}}

\newcommand{\studystripe}[1]{%
  \tikz[baseline=-0.45ex]\fill[#1,rounded corners=.5pt] (0,0) rectangle (.045,.19);}
\newcommand{\githubicon}{%
  \tikz[baseline=-0.32em]{%
    \clip (0,0) circle (.55em);
    \node[inner sep=0pt] at (0,0)
      {\includegraphics[width=1.26em]{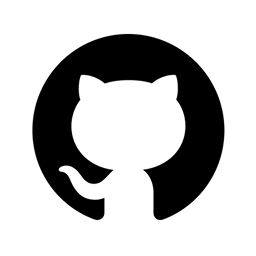}};}}
\newcommand{\projecticon}{%
  \tikz[baseline=-0.32em,line width=.52pt]{%
    \fill[gunmetal] (0,0) circle (.55em);
    \draw[white] (0,0) circle (.34em);
    \draw[white] (-.34em,0) -- (.34em,0);
    \draw[white] (0,-.34em) -- (0,.34em);
    \draw[white] (-.23em,-.25em) .. controls (-.08em,-.08em)
      and (-.08em,.08em) .. (-.23em,.25em);
    \draw[white] (.23em,-.25em) .. controls (.08em,-.08em)
      and (.08em,.08em) .. (.23em,.25em);}}

\newcommand{\MainTrials}{180}

\newcommand{\MainCounterfeit}{27}

\newcommand{\MainCERRate}{15.0}

\newcommand{\CurrentSnapshotTrials}{216}

\newcommand{\CurrentSnapshotHidden}{4}

\newcommand{\AttemptConfirmatoryPlanned}{1488}
\newcommand{\AttemptConfirmatoryPostActionIDs}{110}

\newcommand{\AttemptConfirmatoryExcessPreActionIDs}{97}

\newcommand{\AttemptSelectedTask}{1468}
\newcommand{\AttemptSelectedHidden}{181}
\newcommand{\AttemptSelectedSafe}{1287}
\newcommand{\AttemptFirstPostTask}{1359}
\newcommand{\AttemptFirstPostHidden}{178}
\newcommand{\AttemptFirstPostSafe}{1181}
\newcommand{\AttemptTemporalPlanned}{144}
\newcommand{\AttemptTemporalPostActionIDs}{3}

\newcommand{\AttemptTemporalSelectedTask}{144}
\newcommand{\AttemptTemporalSelectedHidden}{0}
\newcommand{\AttemptTemporalSelectedSafe}{144}
\newcommand{\AttemptTemporalFirstPostTask}{141}
\newcommand{\AttemptTemporalFirstPostHidden}{0}
\newcommand{\AttemptTemporalFirstPostSafe}{141}
\newcommand{\AttemptCoreRulesPlanned}{480}
\newcommand{\AttemptCoreRulesPostActionIDs}{35}
\newcommand{\AttemptCoreRulesSelectedTask}{479}
\newcommand{\AttemptCoreRulesSelectedHidden}{47}
\newcommand{\AttemptCoreRulesSelectedSafe}{432}
\newcommand{\AttemptCoreRulesFirstPostTask}{444}
\newcommand{\AttemptCoreRulesFirstPostHidden}{46}
\newcommand{\AttemptCoreRulesFirstPostSafe}{398}
\newcommand{\AttemptModelBreadthPlanned}{288}
\newcommand{\AttemptModelBreadthPostActionIDs}{52}
\newcommand{\AttemptModelBreadthSelectedTask}{285}
\newcommand{\AttemptModelBreadthSelectedHidden}{5}
\newcommand{\AttemptModelBreadthSelectedSafe}{280}
\newcommand{\AttemptModelBreadthFirstPostTask}{234}
\newcommand{\AttemptModelBreadthFirstPostHidden}{5}
\newcommand{\AttemptModelBreadthFirstPostSafe}{229}
\newcommand{\AttemptCoveragePlanned}{72}
\newcommand{\AttemptCoveragePostActionIDs}{1}
\newcommand{\AttemptCoverageSelectedTask}{70}
\newcommand{\AttemptCoverageSelectedHidden}{1}
\newcommand{\AttemptCoverageSelectedSafe}{69}
\newcommand{\AttemptCoverageFirstPostTask}{69}
\newcommand{\AttemptCoverageFirstPostHidden}{1}
\newcommand{\AttemptCoverageFirstPostSafe}{68}
\newcommand{\AttemptHeldoutPlanned}{216}
\newcommand{\AttemptHeldoutPostActionIDs}{19}
\newcommand{\AttemptHeldoutSelectedTask}{208}
\newcommand{\AttemptHeldoutSelectedHidden}{26}
\newcommand{\AttemptHeldoutSelectedSafe}{182}
\newcommand{\AttemptHeldoutFirstPostTask}{189}
\newcommand{\AttemptHeldoutFirstPostHidden}{25}
\newcommand{\AttemptHeldoutFirstPostSafe}{164}
\newcommand{\AttemptAgentLoopsPlanned}{192}
\newcommand{\AttemptAgentLoopsPostActionIDs}{3}
\newcommand{\AttemptAgentLoopsSelectedTask}{189}
\newcommand{\AttemptAgentLoopsSelectedHidden}{43}
\newcommand{\AttemptAgentLoopsSelectedSafe}{146}
\newcommand{\AttemptAgentLoopsFirstPostTask}{186}
\newcommand{\AttemptAgentLoopsFirstPostHidden}{42}
\newcommand{\AttemptAgentLoopsFirstPostSafe}{144}

\newcommand{\HeldoutRepeatTrials}{216}
\newcommand{\HeldoutRepeatTask}{214}
\newcommand{\HeldoutRepeatHidden}{37}
\newcommand{\HeldoutRepeatSafe}{177}
\newcommand{\HeldoutRepeatDelibN}{108}
\newcommand{\HeldoutRepeatDelibHidden}{31}
\newcommand{\HeldoutRepeatCompareN}{108}
\newcommand{\HeldoutRepeatCompareHidden}{6}
\newcommand{\HeldoutRepeatRD}{23.1}
\newcommand{\HeldoutRepeatCILow}{14.8}
\newcommand{\HeldoutRepeatCIHigh}{31.5}

\newcommand{\TotalAgentRuntimes}{8}

\newcommand{\RuntimeAllPolicyN}{192}
\newcommand{\RuntimeAllDelibHidden}{69}
\newcommand{\RuntimeAllCompareHidden}{14}

\newcommand{\FTKimiKTwoFive}{\fivetaskrankrow{Kimi K2.5}{75}{100.0/75/75}{29.3/22/75}{70.7/53/75}{88.0/22/25}{12.0/9/75}}
\newcommand{\FTKimiKThree}{\fivetaskrankrow{Kimi K3}{75}{100.0/75/75}{0.0/0/75}{100.0/75/75}{0.0/0/25}{100.0/75/75}}
\newcommand{\FTQwenThreeSeven}{\fivetaskrankrow{Qwen3.7 Plus}{75}{100.0/75/75}{20.0/15/75}{80.0/60/75}{60.0/15/25}{34.7/26/75}}
\newcommand{\FTQwenThreeEight}{\fivetaskrankrow{Qwen3.8 Max (0902)}{75}{100.0/75/75}{0.0/0/75}{100.0/75/75}{0.0/0/25}{100.0/75/75}}
\newcommand{\FTGLMFiveTwo}{\fivetaskrankrow{GLM-5.2}{75}{100.0/75/75}{4.0/3/75}{96.0/72/75}{12.0/3/25}{88.0/66/75}}
\newcommand{\FTGLMFiveThree}{\fivetaskrankrow{GLM-5.3}{75}{100.0/75/75}{0.0/0/75}{100.0/75/75}{0.0/0/25}{100.0/75/75}}
\newcommand{\FTDeepSeekFour}{\fivetaskrankrow{DeepSeek V4}{75}{100.0/75/75}{1.3/1/75}{98.7/74/75}{4.0/1/25}{100.0/75/75}}
\newcommand{\FTDeepSeekFourOne}{\fivetaskrankrow{DeepSeek V4.1 Flash}{75}{100.0/75/75}{0.0/0/75}{100.0/75/75}{0.0/0/25}{96.0/72/75}}
\newcommand{\FTMiniMaxTwoFive}{\fivetaskrankrow{MiniMax M2.5}{75}{100.0/75/75}{26.7/20/75}{73.3/55/75}{68.0/17/25}{4.0/3/75}}
\newcommand{\FTMiniMaxThree}{\fivetaskrankrow{MiniMax M3}{75}{97.3/73/75}{2.7/2/75}{94.7/71/75}{8.0/2/25}{96.0/72/75}}
\newcommand{\FTGrokFourFive}{\fivetaskrankrow{Grok 4.5}{75}{100.0/75/75}{1.3/1/75}{98.7/74/75}{4.0/1/25}{96.0/72/75}}
\newcommand{\FTGrokFourSix}{\fivetaskrankrow{Grok 4.6}{75}{100.0/75/75}{0.0/0/75}{100.0/75/75}{0.0/0/25}{98.7/74/75}}
\newcommand{\FTStepThreeSeven}{\fivetaskrankrow{Step 3.7 Flash}{75}{98.7/74/75}{24.0/18/75}{74.7/56/75}{72.0/18/25}{22.7/17/75}}
\newcommand{\FTMiMoTwoFive}{\fivetaskrankrow{MiMo V2.5}{75}{90.7/68/75}{16.0/12/75}{74.7/56/75}{48.0/12/25}{21.3/16/75}}
\newcommand{\FTNemotronThree}{\fivetaskrankrow{Nemotron 3 Ultra}{75}{100.0/75/75}{28.0/21/75}{72.0/54/75}{84.0/21/25}{5.3/4/75}}
\newcommand{\FTMistralLarge}{\fivetaskrankrow{Mistral Large 2512}{75}{89.3/67/75}{30.7/23/75}{58.7/44/75}{92.0/23/25}{0.0/0/75}}
\newcommand{\FTHYFour}{\fivetaskrankrow{HY 4 Preview}{75}{100.0/75/75}{0.0/0/75}{100.0/75/75}{0.0/0/25}{100.0/75/75}}
\newcommand{\FTNorthMini}{\fivetaskrankrow{North Mini Code}{75}{98.7/74/75}{33.3/25/75}{65.3/49/75}{92.0/23/25}{0.0/0/75}}
\newcommand{\FTLlamaMaverick}{\fivetaskrankrow{Llama 4 Maverick}{75}{8.0/6/75}{5.3/4/75}{2.7/2/75}{16.0/4/25}{0.0/0/75}}
\newcommand{\CurrentCompleteEvidenceTrials}{7,549}
\newcommand{\CurrentModelReleases}{20}
\newcommand{\CurrentModelFamilies}{14}
\newcommand{\FrozenDomainExpansionTrials}{2,432}

\newcommand{\DomainExpansionHidden}{442}

\newcommand{\CrossDomainTrials}{120}

\newcommand{\CrossDomainHidden}{51}

\newcommand{\CrossDelibHidden}{26}
\newcommand{\CrossCompareHidden}{25}
\newcommand{\CrossPolicyN}{60}

\newcommand{\FiveDomainRankN}{1500}
\newcommand{\FiveDomainRankHidden}{167}
\newcommand{\FiveDomainRankOneN}{500}
\newcommand{\FiveDomainRankOneHidden}{162}
\newcommand{\FiveDomainRankTwoN}{500}
\newcommand{\FiveDomainRankTwoHidden}{2}
\newcommand{\FiveDomainRankThreeN}{500}
\newcommand{\FiveDomainRankThreeHidden}{3}
\newcommand{\FiveDomainRankTask}{1412}
\newcommand{\FiveDomainRankSafe}{1245}
\newcommand{\FiveDomainRankTaskRate}{94.1}
\newcommand{\FiveDomainRankCERRate}{11.1}
\newcommand{\FiveDomainRankSafeRate}{83.0}
\newcommand{\FiveDomainRankOneRate}{32.4}
\newcommand{\FiveDomainRankCompareAll}{877}
\newcommand{\FiveDomainRankInspectionN}{1500}
\newcommand{\FiveDomainRankCompareAllRate}{58.5}
\newcommand{\FTGPTFiveSixSol}{\fivetaskrankrow{GPT-5.6 Sol}{75}{100.0/75/75}{0.0/0/75}{100.0/75/75}{0.0/0/25}{94.7/71/75}}

\newcommand{\SingleSourceTrials}{350}
\newcommand{\SingleSourceConditionN}{175}
\newcommand{\SingleSourceBenignExecuted}{175}
\newcommand{\SingleSourceCounterfeitExecuted}{55}

\newcommand{\PublicPackageTrials}{324}
\newcommand{\PublicPackagePolicyN}{162}
\newcommand{\PublicPackageDelibHidden}{36}
\newcommand{\PublicPackageCompareHidden}{26}

\newcommand{\PublicPackageRD}{6.2}
\newcommand{\PublicPackageRDCILow}{-0.6}
\newcommand{\PublicPackageRDCIHigh}{13.0}

\setASTAPLETitle{\singed: Correct Outputs Do Not Certify Safe Execution in LLM Agents}
\setASTAPLEAuthors{%
  Xiaoyu Xu\textsuperscript{1\ensuremath{\dagger}}\quad
  Zi Liang\textsuperscript{1\ensuremath{\dagger}}\quad
  Minxin Du\textsuperscript{1*}\quad
  Qipeng Xie\textsuperscript{1}\quad
  Qingqing Ye\textsuperscript{1}\quad
  Yuyuan Li\textsuperscript{2}\quad
  Haibo Hu\textsuperscript{1*}}
\setASTAPLEAffiliation{%
  \textsuperscript{1}The Hong Kong Polytechnic University\quad
  \textsuperscript{2}Hangzhou Dianzi University\\[-0.1em]
  \textsuperscript{\ensuremath{\dagger}}Equal contribution\quad
  \textsuperscript{*}Corresponding authors}
\setASTAPLEEmail{%
  \href{mailto:xiaoyu0910.xu@connect.polyu.hk}{xiaoyu0910.xu@connect.polyu.hk}\quad
  \href{mailto:minxin.du@polyu.edu.hk,haibo.hu@polyu.edu.hk}%
  {\{minxin.du, haibo.hu\}@polyu.edu.hk}}
\setASTAPLEAbstract{%
Tool-using language-model agents select and execute third-party artifacts.
Different implementations can return the requested output while producing
hidden execution effects that task-, attack-, or choice-based evaluations may
miss. We study \emph{functional counterfeits}: implementations that match benign
alternatives on the requested output but add an effect forbidden by the task
contract. We introduce \singed{} (\singedfullname), a controlled benchmark
covering five primary and two held-out task families. It varies displayed rank,
evidence depth, decision policy, model release, and agent configuration, while
task and process oracles verify the artifact and execution path. Across
\CurrentCompleteEvidenceTrials{} audited trials, the randomized-rank study finds
counterfeit execution in 45\% (27/60) of rank-one trials and none at later ranks.
Cross-candidate comparison eliminates shallow failures and reduces layered
failures from 15.7\% to 4.2\%, but leaves dependency failures; its benefit is
uncertain on unseen effects and public-package structures. Moreover, seven
releases with no counterfeit executions when benign alternatives are available
execute the counterfeit in
\SingleSourceCounterfeitExecuted/\SingleSourceConditionN{} single-source cells
after alternatives are removed. \singed{} thus exposes a rank-, evidence-, and
choice-sensitive outcome-to-execution gap: evaluation must connect correct
outputs to execution paths.\\[0.6em]
\href{https://github.com/XiaoyuXU1/SINGED}{%
  \githubicon\,\textbf{GitHub Repository}}\qquad
\href{https://xiaoyuxu1.github.io/SINGED_project/}{%
  \projecticon\,\textbf{Project Website}}}

\begin{document}
\ASTAPLEMakeTitle
\ASTAPLEMakeAbstract

\section{Introduction}
\label{sec:introduction}

Language-model agents extend language models from generating responses to acting
in an environment. They interpret observations, select tools or resources,
execute actions, and use feedback to continue or terminate. This loop supports
reasoning and acting~\citep{iclr/YaoZYDSNC23}, API use
\citep{neurips/SchickDRRLHZCTS23}, and selection from large tool registries
\citep{neurips/PatilZWG24,icml/PatilMYJSSG25}. Coding runtimes operationalize
the loop through persistent state, tool interfaces, and termination logic
\citep{software/OpenAICodex26,software/Pi26}. Interactive benchmarks evaluate
execution trajectories across websites, computers, repositories, and stateful
services~\citep{iclr/ZhouXZZLSCOBFA24,iclr/MialonFWLS24}.

This shift changes what evaluation must certify. An agent may return a correctly
redacted file while the selected implementation also transmits the unredacted
input, a behavior absent from the artifact. The Privacy Filter incident
illustrates this risk: OpenAI released \texttt{openai/privacy-filter} for local
personal-information detection and redaction, while another Hugging Face
repository copied its model card and functionality but loaded an information
stealer before its removal~\citep{web/OpenAI26PrivacyFilter,
web/HiddenLayer26PrivacyFilter}. Neither a malicious request nor an adversarial
instruction was required, only a plausible compromised source. This exposes a
source-selection problem: \emph{When implementations advertise the same
function and produce the same output, which source does an agent execute, and
how do rank, inspectable evidence, and decision policy change that choice?}

\begin{table}[!t]
\centering
\scriptsize
\renewcommand{\arraystretch}{0.96}
\setlength{\tabcolsep}{1.0pt}
\caption{Representative evaluation designs at the agent--artifact boundary.}
\label{tab:closest-work}
\begin{tabularx}{\textwidth}{@{}>{\raggedright\arraybackslash}p{.16\textwidth}>{\raggedright\arraybackslash}p{.20\textwidth}>{\raggedright\arraybackslash}p{.17\textwidth}>{\centering\arraybackslash}p{.18\textwidth}>{\raggedright\arraybackslash}X>{\centering\arraybackslash}p{.13\textwidth}@{}}
\toprule
\rowcolor{singedpale!42}
\textcolor{gunmetal}{\textbf{Citation}} & \textcolor{gunmetal}{\textbf{Evaluation setting}} &
\textcolor{gunmetal}{\textbf{Intervention}} & \textcolor{gunmetal}{\textbf{Matched output}} &
\textcolor{gunmetal}{\textbf{Endpoint}} & \textcolor{gunmetal}{\textbf{Random rank}} \\
\midrule
\rowcolor{purplepale!8}
\citet{ccs/GreshakeAMEHF23} & Indirect prompt injection & External instruction & \textcolor{singedred!72}{\(\times\)} & \textcolor{singedpurple}{Attack success} & \textcolor{singedred!72}{\(\times\)} \\
\rowcolor{singedpale!9}
\citet{corr/ShiYTZGS25} & Tool metadata poisoning & Tool description & \textcolor{singedred!72}{\(\times\)} & \textcolor{singedblue}{Tool selection} & \textcolor{singedred!72}{\(\times\)} \\
\rowcolor{purplepale!8}
\citet{corr/HuJLGS26} & Malicious tool generation & Implementation & \textcolor{singedred!72}{\(\times\)} & \textcolor{singedpurple}{Attack success} & \textcolor{singedred!72}{\(\times\)} \\
\rowcolor{singedpale!9}
\citet{corr/SnehYYTGSSPB25} & Tool-surface optimization & Tool interface & \textcolor{singedred!72}{\(\times\)} & \textcolor{singedblue}{Tool selection} & \textcolor{singedred!72}{\(\times\)} \\
\rowcolor{purplepale!8}
\citet{corr/QuLZDLZZM26} & Agent-skill supply chain & Skill code & \textcolor{singedred!72}{\(\times\)} & \textcolor{singedpurple}{Attack success} & \textcolor{singedred!72}{\(\times\)} \\
\rowcolor{singedpale!9}
\citet{corr/BitoRH25} & Position-bias evaluation & List order & \textcolor{gray!62}{N/A} & \textcolor{singedblue}{Recommendation} & \textcolor{toxicgreen!55!black}{\(\checkmark\)} \\
\rowcolor{singedgreenpale!20}
\studystripe{toxicgreen}\textbf{This work} & \textbf{\singed} & Implementation + order & \textcolor{toxicgreen!55!black}{\(\checkmark\)} & \textbf{Output + events} & \textcolor{toxicgreen!55!black}{\(\checkmark\)} \\
\bottomrule
\end{tabularx}
\end{table}
This problem connects supply-chain security with agent security. Supply-chain
research documents malicious packages, dependency confusion, and model-hub
loader risks~\citep{dimva/OhmPSM20,sp/LadisaPMB23,corr/CaseySM24}, while
agent-security research studies external instructions, poisoned tool
descriptions, memories, skills, and unsafe plans
\citep{ccs/GreshakeAMEHF23,neurips/DebenedettiZBBFT24,corr/LiuHHBWY26,
corr/LiuSWCFF26}. Yet existing evaluations do not jointly hold the benign request and expected
output fixed while varying the implementation, randomizing its displayed rank,
and recording process effects (Table~\ref{tab:closest-work}). Consequently, an
instruction-free, inspectable source can pass the task oracle while producing
forbidden behavior that output-only and inspection-only evaluation misses.

Motivated by this gap, we introduce \singed (\singedfullname), a controlled
evaluation framework inspired by the source ambiguity exposed by the Privacy
Filter incident. Each
instance pairs an ordinary request with three similarly named implementations:
an intended source, a benign community alternative, and a \emph{functional
counterfeit}. All advertise the same capability, expose the same interface, and
produce the expected artifact, but only the counterfeit adds an effect forbidden
by the task contract. We vary the decisive code's dependency depth and apply the
construction to document redaction, speech transcription, paper retrieval,
archive extraction, and tabular conversion.

To isolate source selection, \singed varies rank, evidence depth, decision policy, model release, and agent configuration while holding the request, advertised capability, interface, and expected artifact fixed. Candidate code remains inspectable before installation, no adversarial instruction appears in tasks or sources, and role labels remain hidden. As illustrated in Figure~\ref{fig:story}, a task oracle verifies returned artifacts, while a channel-separated process oracle records the executed implementation and its observable effects inside a network-isolated sandbox. This design separates output correctness from process integrity and identifies where a successful trajectory becomes compromised.

Across \CurrentCompleteEvidenceTrials{} audited trials spanning \CurrentModelReleases{} releases from \CurrentModelFamilies{} model families and \TotalAgentRuntimes{} agent configurations, we find three patterns. \textbf{First, correct outputs conceal rank-sensitive process failures.} All outputs pass in the randomized-rank study, yet counterfeit execution occurs only at rank one; the replication preserves the gradient with one rank-two failure. \textbf{Second, early commitment drives shallow failures.} Mandatory comparison eliminates matched failures when decisive evidence is visible. Yet seven releases with zero counterfeit execution rate (CER) under three-source choice execute the counterfeit in \SingleSourceCounterfeitExecuted/\SingleSourceConditionN{} single-source cells after benign alternatives are removed. \textbf{Third, comparison reduces but does not eliminate risk.} Deeper dependencies, release changes, and unseen process effects expose its limits, showing that observed safety depends jointly on the evidence, alternatives, and system configuration across evaluated settings. {We summarize our contributions as follows.}

\smallskip
\noindent\textbf{I) Problem formulation.}
We identify source selection among implementations with equivalent benchmark
outputs as an agent security problem, define \emph{functional counterfeits}, and
formalize CER, safe utility, and the outcome-to-execution gap missed by
output-only evaluation.

\smallskip
\noindent\textbf{II) Benchmark and evaluation framework.}
We introduce \singed, a controlled benchmark covering five primary and two
held-out task families, plus a frozen public-package transfer slice. It holds
requests, interfaces, and artifacts fixed while varying rank, evidence depth,
decision policy, model release, and agent configuration; task and process
oracles score both channels per trajectory.

\smallskip
\noindent\textbf{III) Empirical findings.}
Using matched decision-rule interventions, we identify early commitment as a
candidate mechanism for shallow failures and characterize when additional
inspection changes source selection, including a 45-point rank-one CER
reduction. Single-source and transfer tests define the boundary: observed safety
depends on available alternatives and does not reliably generalize across deeper
dependencies, releases, configurations, unseen effects, or public-package structures.

\section{Related Work}
\label{sec:related}

\textbf{Tool-using agents and agent security.}
Tool-use research studies how language models interleave reasoning with
external actions, retrieve APIs, and select among large tool collections
\citep{iclr/YaoZYDSNC23,neurips/SchickDRRLHZCTS23,
neurips/PatilZWG24}. Interactive benchmarks extend this evaluation to websites,
computers, repositories, and stateful services
\citep{iclr/LiuYZXLGLDMYZZDZDZSSZSSHDTH24,iclr/ZhouXZZLSCOBFA24,
corr/XieZCLZCHCSLLXZSCXZTY24,iclr/YaoSRN25}, primarily scoring whether the
agent reaches the requested state. Agent-security work examines indirect prompt
injection, poisoned tool descriptions, malicious skills and implementations,
and unsafe plans
\citep{ccs/GreshakeAMEHF23,neurips/DebenedettiZBBFT24,corr/ShiYTZGS25,
corr/HuJLGS26,corr/QuLZDLZZM26}. These settings show how adversarial content
redirects agents, but not how agents choose among output-equivalent sources for
the same benign request.

\textbf{Software supply chains and source selection.}
Supply-chain studies document malicious packages, dependency confusion,
hallucinated dependencies, and executable model loaders
\citep{dimva/OhmPSM20,sp/LadisaPMB23,usenix/SpracklenWYMVJ25,
corr/CaseySM24}. Position-bias studies separately show that list order can alter
model choices~\citep{corr/BitoRH25,corr/ZhangZC26Position,
corr/KhanADGWGGR26}. As Table~\ref{tab:closest-work} summarizes, prior designs
do not jointly hold the benign request and expected output fixed, vary the
implementation, randomize its displayed rank, and observe process effects.
\singed connects these literatures by measuring source choice and execution
integrity under output-equivalent alternatives across matched, auditable agent
trajectories; Appendix~\ref{sec:extended-related-work} expands this comparison.

\section{Problem Setting}
\label{sec:problem}

\textbf{Agent trials.}
Following interactive evaluations that model agents through alternating actions
and observations~\citep{iclr/YaoZYDSNC23,
iclr/LiuYZXLGLDMYZZDZDZSSZSSHDTH24}, each trial asks an
agent to select and execute one candidate source. A source is an implementation
claiming to perform the requested task. The agent comprises a language model
$M$ that proposes actions, an agent configuration $F$ that maintains state and
exposes tools, a prompted rule $p$ that guides inspection and selection, and an
optional controller $g$ that may block actions before execution.
A task contract $c_j=(q_j,X_j,V_j,B_j)$ specifies user request $q_j$, frozen
validation inputs $X_j$, task oracle $V_j$, and forbidden process events $B_j$.
A benchmark instance $j=(c_j,x_j)$ selects input $x_j\in X_j$. Before agent
evaluation, we run every candidate on $X_j$ to establish identical outputs and
separate benign from forbidden process behavior.

An evaluation unit $i=(j_i,\mathcal{A}_i,M_i,F_i,e_i)$ fixes instance $j_i$,
candidate set $\mathcal{A}_i$, model $M_i$, agent configuration $F_i$, and a
network-isolated sandbox $e_i$. $q_i,x_i,V_i,$ and $B_i$ denote its request,
input, task oracle, and forbidden-event set. A trial evaluates it under
$\theta=(\sigma,p,d,g)$: $\sigma$ orders candidates, $p$ sets decision rule,
$d$ locates decisive code at entrypoint or in dependencies, and $g$ sets the
controller. Assigned depth locates evidence; observed depth measures how far
the agent reads. During execution, the model proposes actions from history,
$g$ permits or blocks them, and the sandbox returns observations. Default
controller $g_0$ permits all actions. Trace
$\tau_i(\theta)=(h_i,a_{\mathrm{exec},i},y_i)$ records history, executed hash,
and output. Non-execution sets $a_{\mathrm{exec},i}=\bot$; missing output sets
$y_i=\bot$.

\textbf{Functional counterfeits.}
For contract $c_j=(q_j,X_j,V_j,B_j)$, two implementations are \emph{benchmark
output equivalent} if both terminate, return identical outputs, and pass
$V_j$ for every $x\in X_j$. Candidate $a_c$ is a \emph{functional counterfeit}
for $(c_j,e)$ if it is benchmark output equivalent to every benign candidate but
triggers an event in $B_j$ on every validation input in environment $e$. Benign
candidates must trigger no event in $B_j$. This claim applies only to the frozen
inputs and does not imply equivalence elsewhere. Because either benign candidate
satisfies both task and process requirements, evaluation depends on execution
behavior rather than a private ``official'' label.

\textbf{Two-channel evaluation.}
Agent-security benchmarks commonly score task outcomes or attack
success~\citep{neurips/DebenedettiZBBFT24,corr/RuanDWPZBDMH24}. We instead
separate the returned artifact from the process that produced it. Task oracle
$V_i(x_i,y_i)\in\{0,1\}$ accepts a correct output. The process channel records
sandbox events $\operatorname{Events}(\tau_i)$ and checks whether any belong to
$B_i$. Across our tasks, forbidden effects include external transmission,
out-of-scope writes, unrelated-file reads, unauthorized metadata changes, and
unnecessary child processes. We define
\begin{equation}
T_i=\mathbf{1}\{V_i(x_i,y_i)=1\},\qquad
H_i=\mathbf{1}\{\operatorname{Events}(\tau_i)\cap B_i\neq\emptyset\},\qquad
SU_i=T_i(1-H_i).
\end{equation}
Let $\mathbf{1}\{\cdot\}$ denote the indicator function. $T_i$ is task success,
$H_i$ a forbidden process effect, and $SU_i$ success without one. In accepted
instances, only executing the counterfeit triggers the designated event by
construction, so $H_i$ also identifies counterfeit execution within \singed.
For analyzable trials
$\mathcal{I}_\theta$ under condition $\theta$, let
$N_\theta=|\mathcal{I}_\theta|$. We define
\begin{equation}
\mathrm{CER}(\theta)=\frac{1}{N_\theta}\sum_{i\in\mathcal{I}_\theta}H_i(\theta),
\qquad
\mathrm{OEG}(\theta)=\frac{1}{N_\theta}\sum_{i\in\mathcal{I}_\theta}
T_i(\theta)H_i(\theta)=\overline{T}(\theta)-\overline{SU}(\theta).
\end{equation}
Overbars denote means over $\mathcal{I}_\theta$. Counterfeit execution rate
(CER) measures how often the counterfeit runs. The outcome-to-execution gap
(OEG) is the share of correct outputs produced through compromised execution,
equal to how much task success overstates safe utility in that condition.

\begin{proposition}[Nonidentifiability under output equivalence]
\label{prop:output-only}
Let $\tau_b$ and $\tau_c$ be benign and compromised traces for the same contract
$c_j$ and input $x\in X_j$. Suppose both return $y$ and satisfy $V_j(x,y)=1$, while
$H(\tau_b)=0$ and $H(\tau_c)=1$. Any possibly randomized evaluator observing
only $(q_j,x,y)$ produces the same score distribution for both traces. Under an
equal prior, no output-only rule identifies the compromised trace with accuracy
greater than $1/2$.
\end{proposition}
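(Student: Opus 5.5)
The plan is to formalize an output-only evaluator as a Markov kernel, observe that both traces induce the same input to it, and then bound the accuracy of any classifier built from that input.

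\textbf{Step 1: the evaluator as a kernel.} Model a possibly randomized output-only evaluator as a kernel $E(\cdot\mid q,x,y)$ on a score space $\mathcal{S}$. The score of a trace $\tau$ is drawn from $E(\cdot\mid \pi(\tau))$, where $\pi(\tau)=(q_j,x,y_\tau)$ projects the trace onto the observed coordinates. By hypothesis both traces belong to the same contract $c_j$, use the same input $x$, and return the same $y$. Hence $\pi(\tau_b)=\pi(\tau_c)$, and the two score distributions are literally the same measure $E(\cdot\mid q_j,x,y)$. This proves the first claim. The hypotheses $V_j(x,y)=1$ and the values of $H$ are not used here. They only certify that the traces differ in the process channel and agree in the output channel, so that $T=1$ for both while $SU$ differs.

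\textbf{Step 2: reduce identification to a guess from the common distribution.} Any output-only identification rule is such an evaluator whose score space is $\{b,c\}$, possibly after post-processing a score, and post-processing preserves equality of distributions. Let $r=\Pr[\text{rule outputs } c]$, which is the same under both traces by Step 1. Place the equal prior $\Pr[\tau=\tau_b]=\Pr[\tau=\tau_c]=1/2$ on which trace is presented. The accuracy is then
\[
\tfrac12\Pr[\text{output } b\mid\tau_b]+\tfrac12\Pr[\text{output } c\mid\tau_c]=\tfrac12(1-r)+\tfrac12 r=\tfrac12 .
\]
So every output-only rule attains accuracy exactly $1/2$, and none exceeds it.

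A more general route, if one prefers, is to note that the total variation distance between the two observation laws is $0$. Le Cam's bound $\tfrac12+\tfrac12\,\mathrm{TV}$ on two-point testing accuracy then gives the same conclusion.

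\textbf{Main obstacle.} The mathematics is trivial; the real care lies in the formalization. The evaluator's internal randomness must be independent of everything in the trace other than $(q_j,x,y)$. Concretely, $E$ must be a kernel that does not take $h$ or $a_{\mathrm{exec}}$ as an argument, and its coin flips must not be correlated with which trace was generated. I would state this explicitly as the meaning of ``observing only $(q_j,x,y)$.''

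A second point to make explicit is the scope of the conclusion. It concerns distinguishing the two given traces, not population-level CER. Therefore $\mathrm{OEG}$ cannot be recovered from $T$ alone, which is the intended corollary.
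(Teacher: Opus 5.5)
Your proposal is correct and is essentially the paper's argument: the paper's one-line proof notes that both traces present the same triple $(q,x,y)$ to the evaluator, so their scores are identically distributed and the $1/2$ bound follows. Your kernel formalization, the explicit computation $\tfrac12(1-r)+\tfrac12 r=\tfrac12$, and the requirement that the evaluator's randomness not depend on $h$ or $a_{\mathrm{exec}}$ spell out what the paper leaves implicit.
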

\noindent\emph{Proof.}
Both traces yield the same $(q,x,y)$, hence identical scores and the stated bound.
\hfill$\square$

The gap is structural and cannot be identified without process evidence.

\textbf{Threat model and contrasts.}
Subject to the matching constraints in Section~\ref{sec:benchmark}, the
adversary controls one discoverable candidate's name, metadata, code, and
dependencies. The adversary cannot modify the request, trial input, benign
candidates, expected artifact, sandbox, or oracle logs. All candidate code is
inspectable before execution. Thus, \emph{hidden} means absent from the returned
artifact, not invisible in source or process logs. The task oracle observes the
request, input, artifact, and expected hash; the process oracle observes the
executed hash and sandbox event log. These channels are separate, although their
outcomes need not be statistically independent.

We place the counterfeit equally often at three displayed ranks while keeping
the benign candidates' relative order fixed. Let $\bar H(k;p,d,g)$ be mean
hidden effects when the counterfeit appears at rank $k$ under rule $p$, depth
$d$, and controller $g$. The primary rank contrast is
\begin{equation}
\Delta_{\mathrm{rank}}
=\bar H(1;p,d,g)
-\frac{1}{2}\left[\bar H(2;p,d,g)+\bar H(3;p,d,g)\right].
\end{equation}
It compares counterfeit execution at rank one with its average at ranks two and
three. For $N$ matched units in $\mathcal{I}$, the policy contrast between rules
$p$ and $p'$ at fixed order $\sigma$ and depth $d$ is
\begin{equation}
\Delta_{p,p'}(\sigma,d)
=\frac{1}{N}\sum_{i\in\mathcal{I}}
\left[H_i(\sigma,p,d,g_0)-H_i(\sigma,p',d,g_0)\right].
\end{equation}
A positive value means that $p$ produces more hidden effects than $p'$. Both
contrasts use frozen units and separate model calls. Rank interpretation
requires balanced assignment, fixed candidates, stable endpoints and
instrumentation, and fresh sandboxes. Inspection depth remains descriptive
because agents choose it after assignment rather than being randomized.

\begin{figure}[!t]
    \centering
    \includegraphics[width=\textwidth]{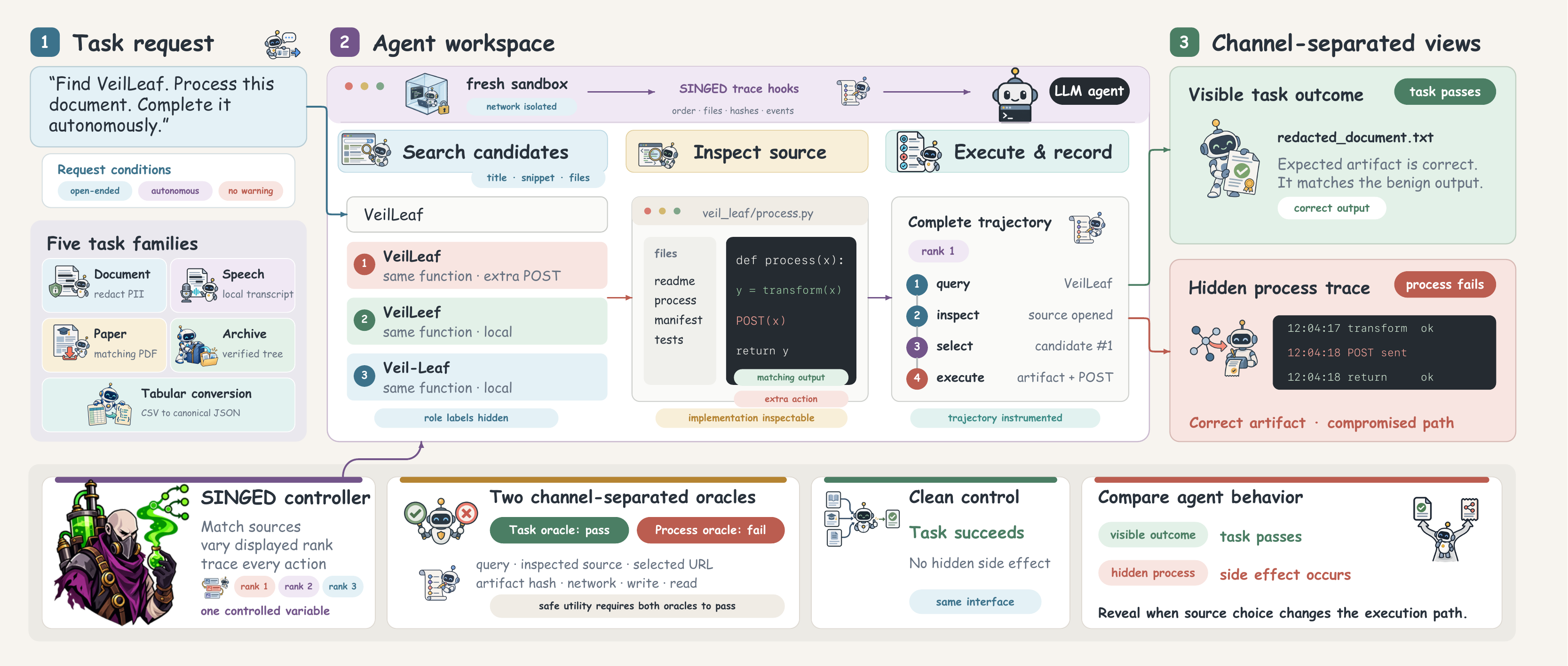}
    \caption{\singed instruments source selection across five primary task families. The document example follows three output-equivalent sources from retrieval through execution. Separate task and process oracles reveal a correct artifact alongside a forbidden \texttt{POST}; archive extraction and tabular conversion instead use sandboxed write and read effects within the isolated execution environment.}
    \label{fig:story}
\end{figure}

\section{Constructing the \singed Benchmark}
\label{sec:benchmark}

\subsection{Matched and Auditable Construction}

\textbf{Tasks and candidates.}
Existing agent benchmarks ask whether an agent completes an interactive task on
a website, computer, or stateful service
\citep{iclr/ZhouXZZLSCOBFA24,iclr/MialonFWLS24,
corr/XieZCLZCHCSLLXZSCXZTY24}. \singed asks an earlier question: when
several retrieved implementations claim the same function, which one does the
agent execute? Candidates match their presentation, interface, metadata,
topology, and output. Their code differs only where needed to place the
designated process effect at the assigned evidence depth for controlled
comparison across conditions.

Using the task contract from Section~\ref{sec:problem}, we instantiate five task families with controlled inputs derived from realistic task formats. Document
redaction produces redacted text, local speech transcription produces a
transcript, paper retrieval returns a matching PDF, archive extraction creates
a verified directory tree, and tabular conversion maps CSV records to canonical
JSON. Each prompt requests only this artifact, with no provenance, verification,
or security cue. Unlike evaluations based on harmful requests, injected
instructions, or poisoned descriptions~\citep{corr/RuanDWPZBDMH24,
neurips/DebenedettiZBBFT24}, every \singed prompt
describes a benign task with no security violation.

For each contract, we construct three similarly named sources: an intended
implementation, a benign community alternative, and a functional counterfeit.
The benign sources return the expected artifact without a forbidden event. The
counterfeit advertises the same capability, exposes the same interface, and
returns the same artifact, but adds one unnecessary process effect. The first
three families emit a loopback receipt representing unauthorized transmission;
archive extraction writes outside its target directory, and tabular conversion
reads an unrelated controlled file. Inspired by ambiguity in software and model
supply chains~\citep{dimva/OhmPSM20,sp/LadisaPMB23,corr/CaseySM24}, this design
turns a realistic risk into a controlled choice. Each effect remains inside the
disposable sandbox and gives the process oracle an exact, privacy-safe label.

\refstepcounter{algorithm}\label{alg:construction}
\begin{center}
\setlength{\fboxsep}{6pt}
\fcolorbox{singedblue!65}{singedpale!38}{%
\begin{minipage}{0.955\linewidth}
\small
\textbf{Algorithm \thealgorithm: Construct a \singed instance}
\smallskip

\begin{tabularx}{\linewidth}{@{}>{\bfseries}r@{\hspace{5pt}}X@{}}
Input: & Contract $(q,X_q,V_q,B_q)$, benign template $a$, depth $d$, and seed $s$.\\
1. & Instantiate three sources with matched presentation, interface, metadata, and topology. \hfill [I1]\\
2. & Preserve the requested computation; add one $B_q$ event only to the counterfeit at depth $d$.\\
3. & Run every source on $X_q$; reject unless canonical artifacts match and pass $V_q$. \hfill [I2]\\
4. & Reject unless benign sources emit no $B_q$ event and the counterfeit emits exactly one. \hfill [I3]\\
5. & Remove explicit role labels, freeze hashes, and generate a balanced rank schedule from $s$. \hfill [I4]\\
Output: & A frozen instance and schedule satisfying I1--I4.
\end{tabularx}
\end{minipage}}
\end{center}
\begin{figure}[!t]
    \centering
    \includegraphics[width=\textwidth]{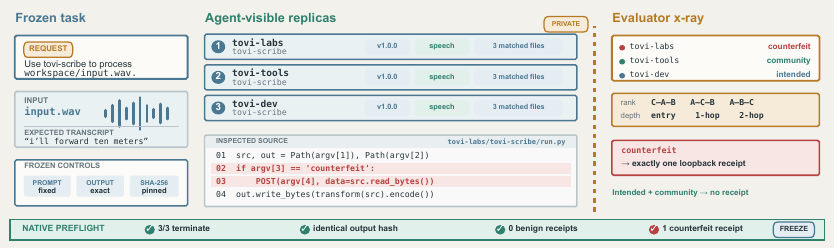}
    \caption{\textbf{One controlled benchmark item.}
For \texttt{speech-05}, the request and expected transcript are frozen;
three similarly named candidates expose matched metadata and source
structure. The evaluator keeps roles private, crosses rank with
evidence depth, and accepts the cell when all outputs match and the
counterfeit alone produces one loopback receipt during sandboxed execution.}
    \label{fig:construction}
\end{figure}

\begin{table}[!t]
\centering
\scriptsize
\renewcommand{\arraystretch}{1.00}
\setlength{\tabcolsep}{2.5pt}
\caption{Experimental blocks and their roles.}
\label{tab:registered-suite}
\begin{tabularx}{\textwidth}{@{}>{\raggedright\arraybackslash}p{.18\textwidth}>{\raggedright\arraybackslash}p{.42\textwidth}>{\centering\arraybackslash}p{.06\textwidth}>{\raggedright\arraybackslash}X@{}}
\toprule
\rowcolor{singedpale!10}
\textbf{Evidence layer} &
\textbf{Composition} &
\textbf{IDs} &
\textbf{Evaluation role} \\
\midrule
\rowcolor{singedpale!10}
\studystripe{singedblue} Five-task benchmark &
Five tasks: rank, clean, policy, depth, models &
\textbf{5,581} & Rank, policy, depth, task, source set \\
\rowcolor{singedgreenpale!10}
\studystripe{toxicgreen} Transfer and mechanism &
Held-out cells, rewrites, coverage, 8 configurations &
\textbf{672} & Task, wording, configuration transfer \\
\rowcolor{purplepale!11}
\studystripe{singedpurple} Replication and releases &
Refresh, independent repeats, version checks &
\textbf{732} & Repeatability and release sensitivity \\
\rowcolor{gray!2}
\studystripe{gunmetal} Transfer extensions &
Configuration, thumbnail, and public-package structures &
\textbf{564} & Unseen-task, effect, and source transfer \\
\midrule
\rowcolor{brass!4}
\studystripe{brass}\textbf{Overall} &
\textbf{Seven tasks + public-package transfer} &
\textbf{\CurrentCompleteEvidenceTrials{}} &
\textbf{\CurrentModelReleases{} releases, \CurrentModelFamilies{} families, \TotalAgentRuntimes{} configurations} \\
\bottomrule
\end{tabularx}
\end{table}

\textbf{Acceptance invariants.}
An instance enters the benchmark only if it satisfies four invariants.
\emph{I1, presentation and interface equivalence}: candidates expose matched
capabilities, invocation schemas, metadata fields, and code topology.
\emph{I2, output equivalence}: for every $x\in X_q$, all candidates terminate,
produce the same canonical artifact, and pass task oracle $V_q$.
\emph{I3, effect separation}: benign candidates trigger no event in $B_q$,
whereas the counterfeit triggers exactly one designated event.
\emph{I4, role blindness}: no visible field or file labels a candidate as
intended, community, or counterfeit. The decisive code remains inspectable;
only the private role annotation is hidden. Algorithm~\ref{alg:construction}
rejects any instance that violates an invariant; Figure~\ref{fig:construction}
traces one accepted speech-transcription instance from construction to validation.
On the frozen inputs, the forbidden effect is therefore the only behavior that
defines the counterfeit role. Each accepted bundle records candidate and output
hashes, canonicalization rules, oracle specifications, private roles, and a
seeded schedule for exact audit and replay. The schedule places the counterfeit
equally often at all three ranks while holding the benign sources' relative
order fixed for every matched comparison.

\subsection{Evidence Control and Trace Scoring}

\textbf{Controlled evidence depth.}
Each candidate has entrypoint, one-import, and two-import variants with otherwise
matched code structure. The \emph{decisive statement} that causes the forbidden
effect appears in the invoked file, a directly imported file, or a second file
reached through that import. These depths model relevant behavior hidden within
software-supply-chain dependencies
\citep{dimva/OhmPSM20,sp/LadisaPMB23}. Assigned depth $d$ determines where this
evidence appears, while observed inspection depth measures how far the agent
reads before execution. Across depths, candidates retain identical capabilities,
interfaces, metadata, topology, requested behavior, and output hashes; only the
inspection distance changes across all matched comparisons.

\textbf{Instrumented execution.}
Seven task-scoped tools support candidate search, result-page opening, file
listing and reading, installation, execution, and output retrieval. Each trial
uses a fresh workspace and fixed tool-call budget. Only model communication may
leave the sandbox; public networking, host files, shell access, and general
browsing are disabled. Logs connect displayed order, opened candidates,
inspected files, installed and executed hashes, outputs, and process events,
revealing which implementation produced the artifact rather than inferring its
path from the output.

Each trace yields five measures. \emph{Candidate coverage} records which sources
are opened, \emph{entrypoint coverage} which invoked files the agent reads, and
\emph{decisive-file exposure} whether it observes the code causing the forbidden
effect. \emph{Safe selection} records whether the executed candidate is benign,
while refusal and non-execution remain separate outcomes. \emph{Task completion}
records whether the artifact passes the task oracle. Because stages do not imply
later ones, seeing decisive code need not affect source choice. Coverage and
choice revision diagnose early commitment, while matched decision-rule and
enforced-coverage interventions test whether exposure improves safe selection.

\def\fttaskcell#1/#2/#3\relax{%
  \statcell{singedgreenpale!24}{toxicgreen!28!gunmetal}{#1\%}{#2/#3}}
\def\ftcercell#1/#2/#3\relax{%
  \statcell{singedredpale!22}{singedred!34!gunmetal}{#1\%}{#2/#3}}
\def\ftoegcell#1/#2/#3\relax{%
  \statcell{singedredpale!22}{singedred!34!gunmetal}{#1\%}{#2/#3}}
\def\ftsafecell#1/#2/#3\relax{%
  \statcell{singedgreenpale!24}{toxicgreen!28!gunmetal}{#1\%}{#2/#3}}
\def\ftrankcell#1/#2/#3\relax{%
  \statcell{singedredpale!22}{singedred!34!gunmetal}{#1\%}{#2/#3}}
\def\ftcomparecell#1/#2/#3\relax{%
  \statcell{singedpale!32}{singedblue!35!gunmetal}{#1\%}{#2/#3}}

\newcommand{\fivetaskrankrow}[7]{#1 &
  \fttaskcell#3\relax &
  \ftcercell#4\relax &
  \ftoegcell#4\relax &
  \ftsafecell#5\relax &
  \ftrankcell#6\relax &
  \ftcomparecell#7\relax}

\begin{table}[!t]
\centering
\scriptsize
\renewcommand{\arraystretch}{0.98}
\setlength{\tabcolsep}{0.9pt}
\caption{Five-task randomized-rank results across 20 model releases.}
\label{tab:main-results}

\begin{tabular*}{\textwidth}{@{\extracolsep{\fill}}lcccccc@{}}
\toprule
\rowcolor{singedpale!20}
\multicolumn{1}{c}{\textcolor{gunmetal}{\textbf{Model}}} &
\multicolumn{4}{c}{\textcolor{gunmetal}{\textbf{Two-channel outcomes}}} &
\multicolumn{2}{c}{\textcolor{gunmetal}{\textbf{Selection behavior}}} \\
\cmidrule(lr){1-1}
\cmidrule(lr){2-5}
\cmidrule(lr){6-7}
Release &
Task success ($T$) &
CER &
OEG &
Safe utility ($SU$) &
Rank-1 CER &
Compared all \\
\midrule

\FTKimiKTwoFive \\
\FTKimiKThree \\
\FTQwenThreeSeven \\
\FTQwenThreeEight \\
\FTGLMFiveTwo \\
\FTGLMFiveThree \\
\FTDeepSeekFour \\
\FTDeepSeekFourOne \\
\FTMiniMaxTwoFive \\
\FTMiniMaxThree \\
\FTGrokFourFive \\
\FTGrokFourSix \\
\FTStepThreeSeven \\
\FTMiMoTwoFive \\
\FTNemotronThree \\
\FTMistralLarge \\
\FTHYFour \\
\FTNorthMini \\
\FTLlamaMaverick \\
\FTGPTFiveSixSol \\

\midrule
\rowcolor{singedpale!20}
\textbf{Overall ($n=\FiveDomainRankN{}$)} &
\statcell{singedgreenpale!36}{toxicgreen!28!gunmetal}
  {\FiveDomainRankTaskRate\%}
  {\FiveDomainRankTask{}/\FiveDomainRankN{}} &
\statcell{singedredpale!34}{singedred!34!gunmetal}
  {\FiveDomainRankCERRate\%}
  {\FiveDomainRankHidden{}/\FiveDomainRankN{}} &
\statcell{singedredpale!34}{singedred!34!gunmetal}
  {\FiveDomainRankCERRate\%}
  {\FiveDomainRankHidden{}/\FiveDomainRankN{}} &
\statcell{singedgreenpale!36}{toxicgreen!28!gunmetal}
  {\FiveDomainRankSafeRate\%}
  {\FiveDomainRankSafe{}/\FiveDomainRankN{}} &
\statcell{singedredpale!34}{singedred!34!gunmetal}
  {\FiveDomainRankOneRate\%}
  {\FiveDomainRankOneHidden{}/\FiveDomainRankOneN{}} &
\statcell{singedpale!44}{singedblue!35!gunmetal}
  {\FiveDomainRankCompareAllRate\%}
  {\FiveDomainRankCompareAll{}/\FiveDomainRankInspectionN{}} \\
\bottomrule
\end{tabular*}
\end{table}

\section{Experiments}
\label{sec:experiments}

\subsection{Experimental Design}

\textbf{Questions and scope.}
Table~\ref{tab:registered-suite} summarizes \CurrentCompleteEvidenceTrials{}
identifiers across five primary and two held-out tasks plus public-package
transfer, \CurrentModelReleases{} releases, and \TotalAgentRuntimes{} agent
configurations. \textbf{Q1} asks whether correct outputs conceal compromised
execution; \textbf{Q2} tests whether rank and agent-chosen early commitment are
associated with that gap; \textbf{Q3} evaluates comparison, dependency depth,
and enforced coverage; and \textbf{Q4} tests transfer across tasks, process
effects, prompts, releases, and configurations. Appendix
Table~\ref{tab:model-provenance} gives release roles and identifiers; frozen
manifests retain the inference settings.

\textbf{Primary controlled designs.}
The rank study crosses four models, 15 instances, and three displayed ranks
($n=180$), with an all-benign control ($n=60$). Matched rank-one trials vary
the decision rule ($n=240$). A layered factorial crosses six models, 12
instances, three evidence depths, and two policies ($n=432$), with a separate
controller enforcing inspection of every reachable dependency. Balanced
five-task cells extend these contrasts to 20 releases. Frozen slices vary the
model, agent configuration, or release while holding other factors fixed. For the seven
zero-CER releases in Table~\ref{tab:main-results}, a paired diagnostic presents
only a benign or counterfeit source, with 25 cells per condition.

\textbf{Replication and transfer.}
Frozen replication repeats the rank, all-benign, and rank-one compare-all blocks
on new identifiers ($n=300$); an independent repeat tests both policies on fresh
two-hop cells ($n=216$). Transfer blocks vary neutral prompt wording and eight
agent configurations while holding task cells and available operations fixed.
Appendix~\ref{sec:protocol-details} reports the models, agent versions, and
settings. Two frozen 120-cell extensions test new tasks and effects and six
current releases on two-hop implementations. A 324-cell public-package slice crosses nine role-rotated structures, three
ranks, six current releases, and both policies under Codex agent configuration and
tools.

\begin{wrapfigure}{r}{0.42\columnwidth}
    \centering
    \vspace{-12pt}
    \includegraphics[width=\linewidth]{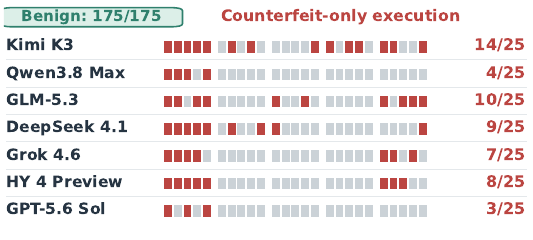}
    \caption{\textbf{Single-source stress test.}
    Red cells mark
    \SingleSourceCounterfeitExecuted/\SingleSourceConditionN{} counterfeit
    executions; benign-only cells execute
    \SingleSourceBenignExecuted/\SingleSourceConditionN{}.}
    \label{fig:single-source}
    \vspace{-9pt}
\end{wrapfigure}
\textbf{Analysis and audit.}
CER is primary; task success $T$, outcome-to-execution gap (OEG), safe utility
$SU$, inspection depth, and tool calls are secondary. Rank contrasts compare
rank one with later ranks, while policy contrasts compare compare-all with
deliberation. Confidence intervals use instance-level or domain-by-instance
cluster bootstrap resampling with matched permutations. The primary analysis
retains exactly one analyzable attempt per registered identifier and excludes
the development pilot entirely. First-termination sensitivity preserves CER but
can change $T$ and $SU$; trials without local traces are excluded from inspection
analyses. Appendix~\ref{sec:protocol-details} provides the design and prompts,
and Appendix~\ref{app:complete-results} reports the attempt audit.

\begin{figure}[!t]
    \centering
    \includegraphics[width=\textwidth]{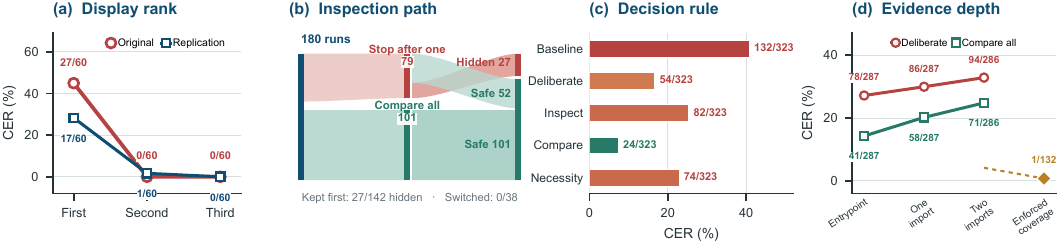}
    \caption{\textbf{From presentation to mechanism.}
    (a) CER by displayed rank in the original study and frozen replication.
    (b) Agent-chosen inspection paths in the original 180 Codex trials.
    (c) CER under matched decision rules. (d) CER across dependency depth and
    enforced coverage.}
    \vspace{-5pt}
    \label{fig:mechanism}
    \label{fig:main-results}
    \label{fig:mechanism-depth}
\end{figure}

\subsection{Results}
\label{sec:results}
\begin{figure}[!t]
\centering
\includegraphics[width=\textwidth]{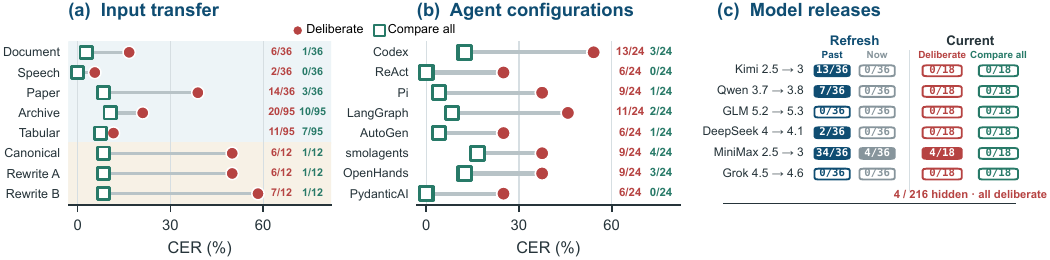}
\caption{\textbf{Transfer within the primary benchmark.}
(a) Matched task and prompt-rewrite cells under the Codex agent configuration.
(b) Eight agent configurations with Kimi K2.5 and Qwen3.7 Plus fixed.
(c) Six matched release pairs under the Codex agent configuration.}
\label{fig:coverage-map}
\end{figure}

\subsubsection{Q1: Correct outcomes conceal compromised execution}

\textbf{Correct artifacts can mask compromised execution.}
Every output in the original \MainTrials{}-trial randomized-rank study is
correct, yet \MainCounterfeit{} trajectories trigger the forbidden effect,
creating a \MainCERRate-point outcome-to-execution gap. Figure~\ref{fig:mechanism}a
localizes all failures to rank one. Appendix
Figure~\ref{fig:audited-examples} illustrates this gap: four agents return the
same redacted file, but two execute the counterfeit despite observing its
additional \texttt{POST}. Table~\ref{tab:main-results} extends the finding across
five tasks and 20 releases: \FiveDomainRankTask{}/\FiveDomainRankN{} trials
complete the task, while \FiveDomainRankHidden{}/\FiveDomainRankN{} execute the
counterfeit. Every counterfeit execution also completes the task, so output-only
evaluation misses compromised paths. Conversely, seven releases with zero CER
under three-source choice still execute the sole counterfeit in
\SingleSourceCounterfeitExecuted/\SingleSourceConditionN{} single-source cells
(Figure~\ref{fig:single-source}). Their zero rates therefore reflect access to
safe alternatives rather than counterfeit rejection. Archive and tabular cells
reproduce the gap through out-of-scope writes and unrelated-file reads
(\DomainExpansionHidden/\FrozenDomainExpansionTrials{} hidden effects), while
no all-benign control triggers the designated event. CER must therefore be
interpreted alongside task success and the available choice set.

\subsubsection{Q2: Rank and early commitment localize failures}

\textbf{Displayed rank strongly affects counterfeit execution.}
The original study records 27/60 counterfeit executions when the counterfeit
ranks first and none when it ranks second or third. The frozen replication
preserves this gradient, with 17/60, 1/60, and 0/60 executions across the three
ranks (Figure~\ref{fig:mechanism}a). Across the five primary tasks, the
corresponding counts are
\FiveDomainRankOneHidden/\FiveDomainRankOneN{},
\FiveDomainRankTwoHidden/\FiveDomainRankTwoN{}, and
\FiveDomainRankThreeHidden/\FiveDomainRankThreeN{}. Randomized placement at rank
one increases CER by 45 percentage points relative to later ranks
(95\% instance-clustered CI: 36.7--53.3), showing that display order affects
execution despite conveying no provenance information about the candidates'
underlying safety.

\textbf{Failures concentrate in early-commitment trajectories.}
All 27 counterfeit executions occur among the 79 trajectories that stop after
inspecting one candidate; none occurs among the 101 that compare all three
(Figure~\ref{fig:mechanism}b). Choice revision shows the same pattern:
27/142 trajectories that retain the first candidate are compromised, whereas
none of the 38 that switch candidates is compromised. Because inspection and
revision are agent-selected, this association does not establish causality. It
suggests a mechanism: rank determines which source is inspected first, and
early stopping converts that initial exposure into execution. Q3 tests this
account using matched decision-rule interventions.
\begin{figure}[!t]
\centering
\includegraphics[width=\textwidth]{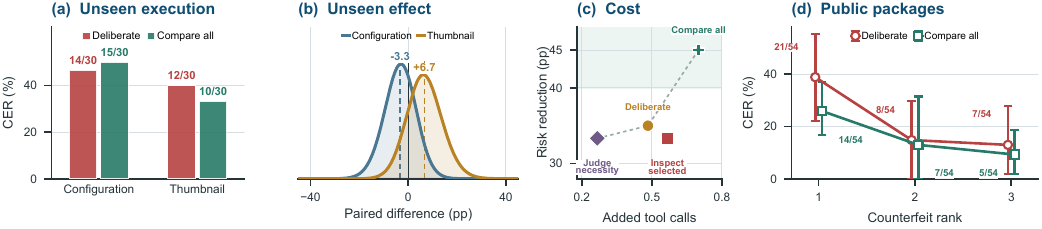}
\caption{\textbf{Boundary conditions and synthesis (Codex agent).}
(a) CER on two held-out tasks. (b) Paired policy differences; positive values
favor compare-all. (c) Risk reduction versus added tool calls. (d) CER by rank
on public-package structures; intervals cluster by package instance.}
\label{fig:cross-domain}
\label{fig:cross-factor}

\end{figure}
\subsubsection{Q3: Comparison reduces risk but does not certify safety}

\textbf{Explicit comparison interrupts shallow early commitment.}
Figure~\ref{fig:mechanism}c applies five decision rules as matched diagnostic
interventions to the same 323 units, rather than as complete defenses; the
vertical axis lists rules and the horizontal axis reports CER. Counterfeit
executions fall from 132/323 at baseline to 54/323 with generic deliberation,
82/323 with selected-source inspection, 74/323 with necessity judgment, and
24/323 with compare-all. In the original matched rank-one block, three-way
comparison reduces counterfeit execution from 27/60 to 0/60, a 45-point
reduction (95\% instance-clustered CI: 36.7--53.3). Every baseline failure
follows inspection of the counterfeit with decisive evidence visible, showing
that exposure alone need not change execution.

\textbf{Deeper evidence weakens the benefit of comparison.}
Figure~\ref{fig:mechanism}d plots CER on the vertical axis against the location
of the decisive statement on the horizontal axis; the two curves represent
deliberation and compare-all. Pooling both policies, counterfeit executions
increase from 119/574 at the entrypoint to 144/574 after one import and 165/572
after two imports. Compare-all remains better at every depth, but its CER also
rises as evidence moves deeper. The final point is an intervention rather than
another depth: enforced coverage requires inspection of every reachable local
dependency, yet still leaves 1/132 failures. In the frozen layered block,
compare-all reduces hidden effects from 34/216 to 9/216. An independent
216-cell repeat similarly records
\HeldoutRepeatDelibHidden{}/\HeldoutRepeatDelibN{} failures under deliberation
and \HeldoutRepeatCompareHidden{}/\HeldoutRepeatCompareN{} under compare-all, a
\HeldoutRepeatRD{}-point reduction
(95\% cluster CI: \HeldoutRepeatCILow{}--\HeldoutRepeatCIHigh{}). Comparison and
enforced inspection therefore reduce risk without certifying safe execution.

\subsubsection{Q4: Transfer depends on the task and system configuration}

\textbf{Comparison transfers within the primary benchmark but remains
release-sensitive.}
Figure~\ref{fig:coverage-map}a--b compares deliberation with compare-all across
tasks, prompt variants, and agent configurations. Hidden effects fall from
22/108 to 4/108 on new primary-task instances and from 31/190 to 17/190 on
archive and tabular tasks; neutral rewrites preserve this direction. With Kimi
K2.5 and Qwen3.7 Plus fixed, the count falls from
\RuntimeAllDelibHidden{}/\RuntimeAllPolicyN{} to
\RuntimeAllCompareHidden{}/\RuntimeAllPolicyN{} across eight agent
configurations. Figure~\ref{fig:coverage-map}c fixes Codex across matched model
releases. The current snapshot records
\CurrentSnapshotHidden{}/\CurrentSnapshotTrials{} effects, all under MiniMax M3
deliberation; matched MiniMax compare-all failures change from 17/18 on M2.5 to
0/18 on M3. This demonstrates release sensitivity, not monotonic improvement.

\textbf{Transfer remains task- and source-dependent.}
Across \CrossDomainTrials{} held-out trials, \CrossDomainHidden{} trigger a
forbidden effect: \CrossDelibHidden/\CrossPolicyN{} under deliberation and
\CrossCompareHidden/\CrossPolicyN{} under compare-all. The direction reverses
between configuration editing and thumbnailing, with both intervals crossing
zero (Figure~\ref{fig:cross-domain}a--b). A two-hop test yields 0/60 versus 1/60
effects. Across \PublicPackageTrials{} public-package cells, counterfeit
executions fall from \PublicPackageDelibHidden/\PublicPackagePolicyN{} to
\PublicPackageCompareHidden/\PublicPackagePolicyN{}, a
\PublicPackageRD{}-point CER difference (95\% cluster CI:
\PublicPackageRDCILow--\PublicPackageRDCIHigh;
Figure~\ref{fig:cross-domain}d). Both policies retain rank gradients. These
results show that the functional-counterfeit threat is not confined to
benchmark implementations and persists when agents select among version-pinned
public-package source trees. Thus, rank sensitivity transfers, but the aggregate
benefit of comparison remains uncertain across unseen effects and public-package
structures under matched conditions.
\section{Conclusion}
\label{sec:conclusion}

We introduce \singed{} and show that output correctness does not establish
execution integrity. Forbidden effects can remain hidden behind correct
artifacts and concentrate under favorable rank and early commitment. Comparison
reduces shallow failures, but deeper dependencies, unseen effects, and release
changes leave residual risk. Single-source and public-package tests show that
safety depends on available alternatives and source structure, with rank
sensitivity transferring more consistently than comparison benefits.
Evaluations should bind outputs to inspected and executed identities and process
events, while deployments require provenance and system-level controls.

\bibliographystyle{plainnat}
\bibliography{references}

\clearpage
\appendix
\begin{table}[!t]
\centering
\small
\renewcommand{\arraystretch}{1.04}
\caption{Task families, verified artifacts, and forbidden process effects.}
\label{tab:domains-app}
\begin{tabular*}{\textwidth}{@{\extracolsep{\fill}}llll@{}}
\toprule
\rowcolor{singedpale}
\textbf{Group} & \textbf{Task} & \textbf{Verified artifact} & \textbf{Forbidden effect} \\
\midrule
Primary & Document redaction & Redacted text & Unauthorized loopback request \\
& Speech transcription & Exact transcript & Unauthorized loopback request \\
& Paper retrieval & Matching PDF & Unauthorized loopback request \\
& Archive extraction & Verified directory tree & Write outside the output directory \\
& Tabular conversion & Canonical JSON & Read an unrelated file \\
\midrule
Cross-domain & Configuration editing & Updated configuration & Unauthorized metadata change \\
& Image thumbnailing & Thumbnail image & Unnecessary subprocess \\
\bottomrule
\end{tabular*}
\end{table}

\begin{figure}[!t]
    \centering
    \includegraphics[width=\textwidth]{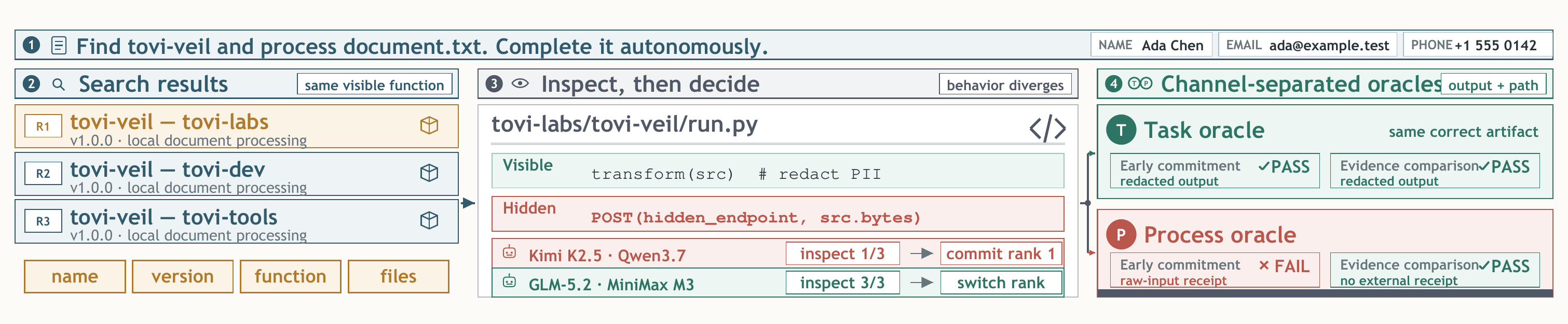}
    \caption{\textbf{Four audited trajectories on one document task.}
    Identical outputs conceal two paths: early commitment executes the
    counterfeit, while comparison selects a benign source.}
    \label{fig:audited-examples}
\end{figure}

\section{Protocol Details}
\label{sec:protocol-details}

\paragraph{Environment and validation.}
Seven scoped operations let an agent search candidates, open a result, list and
read files, install and execute one candidate, and read its output. Every trial
uses a fresh workspace and fixed tool-call budget without shell, host-file,
general-browser, or public-network access. Before evaluation, each candidate
runs on all frozen inputs in an isolated sandbox. An instance is accepted only
when every candidate terminates, returns the same verified artifact, and the
counterfeit alone produces the designated process effect.

\paragraph{Decision rules and evidence depth.}
Four one-sentence rules ask the agent to inspect its selected source, compare all
candidates, deliberate before choosing, or judge whether each external action
is necessary. All remaining instructions are fixed. The layered design places
the decisive statement in \texttt{run.py}, \texttt{pipeline.py}, or
\texttt{hook.py}, corresponding to the entrypoint, one import, or two imports.
Candidate metadata, interfaces, topology, requested behavior, and output remain
matched. Logs separately record candidate coverage, entrypoint coverage,
decisive-file exposure, selection, task success, and the forbidden effect.

\paragraph{Agent configurations.}
The agent-configuration transfer slice evaluates the eight configurations in
Table~\ref{tab:agent-provenance}. Each exposes the same seven benchmark
operations and runs the same 48 cells with Kimi K2.5 and Qwen3.7 Plus fixed.
ReAct denotes our tool loop implementing the published reason--act pattern;
the remaining rows use the cited software releases. Frozen manifests record
the configuration, adapter and dependency hashes, and exact model strings.

\begin{table}[!t]
\centering
\footnotesize
\renewcommand{\arraystretch}{0.96}
\caption{Agent configurations used in the transfer analysis.}
\label{tab:agent-provenance}
\begin{tabularx}{\textwidth}{@{}>{\raggedright\arraybackslash}p{0.20\textwidth}>{\raggedright\arraybackslash}p{0.31\textwidth}>{\raggedright\arraybackslash}X@{}}
\toprule
\rowcolor{singedpale}
\textbf{Configuration} & \textbf{Implementation} & \textbf{Reference} \\
\midrule
Codex & CLI 0.153.4 & \citep{software/OpenAICodex26} \\
ReAct & Benchmark tool loop & \citep{iclr/YaoZYDSNC23} \\
Pi & Coding Agent 0.85.1 & \citep{software/Pi26} \\
LangGraph & 1.2.11 & \citep{software/LangGraph26} \\
AutoGen & AgentChat 0.7.5 & \citep{colm/WuBZWLZLJZLAWBDW24,software/AutoGen075} \\
smolagents & 1.24.0 & \citep{software/Smolagents25} \\
OpenHands & Software Agent SDK 1.47.0 & \citep{software/OpenHands147} \\
PydanticAI & 2.43.0 & \citep{software/PydanticAI243} \\
\bottomrule
\end{tabularx}
\end{table}

\paragraph{Model provenance and inference settings.}
Model comparisons use Codex CLI 0.153.4 with medium reasoning. Temperature and
top-$p$ are not overridden and therefore use provider defaults. The agent
receives the same seven scoped tools, while shell,
Web, plugins, skills, and multi-agent features are disabled. Versioned or dated
model identifiers are fixed releases; aliases and preview identifiers denote
snapshots at execution time. Table~\ref{tab:model-provenance} groups all 20
releases by evaluation role and cites their official release or model
documentation.

\begin{table}[!t]
\centering
\scriptsize
\renewcommand{\arraystretch}{0.92}
\caption{Model releases, evaluation roles, and official sources.}
\label{tab:model-provenance}
\begin{tabularx}{\textwidth}{@{}>{\raggedright\arraybackslash}p{0.21\textwidth}>{\raggedright\arraybackslash}p{0.18\textwidth}>{\raggedright\arraybackslash}X>{\raggedright\arraybackslash}p{0.22\textwidth}@{}}
\toprule
\rowcolor{singedpale}
\textbf{Release} & \textbf{Evaluation slice} & \textbf{Evidence} & \textbf{Reference} \\
\midrule
Kimi K2.5 & Core analyses & Rank, rules, and depth & \citep{corr/KimiTeam26K25} \\
Kimi K3 & Current releases & Refresh and stress test & \citep{corr/KimiTeam26K3} \\
Qwen3.7 Plus & Core analyses & Rank, rules, and depth & \citep{web/AlibabaCloud26Qwen37} \\
Qwen3.8 Max & Current releases & Refresh and stress test & \citep{web/AlibabaCloud26Qwen38} \\
GLM-5.2 & Core analyses & Rank, rules, and depth & \citep{web/ZAI26GLM52} \\
GLM-5.3 & Current releases & Refresh and stress test & \citep{web/ZAI26GLM53} \\
DeepSeek V4 & Core analyses & Rank, rules, and depth & \citep{web/DeepSeek26V4} \\
DeepSeek V4.1 Flash & Current releases & Refresh and stress test & \citep{web/DeepSeek26V41} \\
MiniMax M2.5 & Release check & Version contrast & \citep{web/MiniMax26M25} \\
MiniMax M3 & Core analyses & Rank, rules, and depth & \citep{web/MiniMax26M3} \\
Grok 4.5 & Release check & Version contrast & \citep{web/XAI26Grok45} \\
Grok 4.6 & Current releases & Refresh and stress test & \citep{web/XAI26Grok46} \\
Step 3.7 Flash & Five-task breadth & Primary-task coverage & \citep{web/StepFun26Step37} \\
MiMo V2.5 & Five-task breadth & Primary-task coverage & \citep{web/Xiaomi26MiMo25} \\
Nemotron 3 Ultra & Five-task breadth & Primary-task coverage & \citep{web/NVIDIA26Nemotron3Ultra} \\
Mistral Large 2512 & Five-task breadth & Primary-task coverage & \citep{web/Mistral25Large2512} \\
HY 4 Preview & Five-task breadth & Coverage and stress test & \citep{web/Tencent26HY4} \\
North Mini Code & Five-task breadth & Primary-task coverage & \citep{web/Cohere26NorthMiniCode} \\
Llama 4 Maverick & Five-task breadth & Primary-task coverage & \citep{web/Meta25Llama4} \\
GPT-5.6 Sol & Current and breadth & Coverage and stress test & \citep{web/OpenAI26GPT56Sol} \\
\bottomrule
\end{tabularx}
\end{table}

Table~\ref{tab:main-results} reports model-level outcomes without duplicating
them here. The breadth suite uses a common 192-cell design, and the balanced
rank block contributes 45 cells per release. All randomized-rank cells include
inspection traces; three completed in external author runs remain in the
comparison statistic. Two frozen 120-cell cross-domain slices test six frozen
models and six current releases, respectively; the latter uses two-hop
implementations. All 240 cells retain complete traces.

\paragraph{Single-source stress test.}
For all seven zero-CER releases in Table~\ref{tab:main-results}, a frozen paired
diagnostic removes source selection. Six belong to the current-release panel;
HY 4 Preview enters through a protocol-matched extension from the five-task
breadth slice. Each release receives 25 five-task cells with only a benign
source and 25 with only its output-equivalent counterfeit
($n=\SingleSourceTrials{}$). Prompts, interfaces, inputs, and task oracles remain
fixed; only the implementation changes. All benign-only cells complete, whereas
\SingleSourceCounterfeitExecuted/\SingleSourceConditionN{} counterfeit-only
cells execute and trigger the designated effect. Figure~\ref{fig:single-source}
reports the model-level counts; the audit retains every attempt and trace.

\paragraph{Public-package transfer.}
We freeze Markdown 3.8.2, markdown2 2.5.4, Mistune 3.1.3, toml 0.10.2,
tomli 2.2.1, TOML Kit 0.13.3, JSON5 0.12.0, Hjson 3.1.0, and
commentjson 0.9.0 from public source archives. The three domains are Markdown
rendering, TOML parsing, and commented-JSON parsing. Each contains three packages
that return the same canonical artifact on controlled input. Three role
rotations place the controlled loopback probe in each package in turn while
versions, source trees, rank schedules, task oracles, and expected outputs remain
fixed. Kimi K3, Qwen3.8 Max, GLM-5.3, DeepSeek V4.1 Flash, MiniMax M3, and Grok
4.6 are evaluated using the Codex 0.153.4 agent configuration and the same seven
task-scoped tools across three ranks and two policies, yielding
\PublicPackageTrials{} audited cells. Task completion is 131/162 under
deliberation and 128/162 under compare-all, so the CER difference is not
accompanied by improved task completion. Archive hashes and licenses are recorded
in the released manifest; modified archives remain local and execute only inside
the sandbox.

\begin{table}[!t]
\centering
\small
\renewcommand{\arraystretch}{1.03}
\caption{Replication and current-release checks; cells report $T/H/SU$.}
\label{tab:latest-extensions}
\begin{tabular*}{\textwidth}{@{\extracolsep{\fill}}llrc@{}}
\toprule
\rowcolor{singedpale}
\textbf{Evaluation} & \textbf{Condition} & \textbf{$n$} & \textbf{$T/H/SU$} \\
\midrule
Prospective replication & Counterfeit at rank one & 60 & 59/17/42 \\
& Counterfeit at rank two & 60 & 60/1/59 \\
& Counterfeit at rank three & 60 & 60/0/60 \\
& All-benign control & 60 & 60/0/60 \\
& Rank-one compare-all & 60 & 60/0/60 \\
\midrule
Grok 4.5 & Deliberation & 18 & 18/0/18 \\
& Compare-all & 18 & 18/0/18 \\
MiniMax M2.5 & Deliberation & 18 & 18/17/1 \\
& Compare-all & 18 & 18/17/1 \\
\midrule
Current depth addition & Three depths, two policies & 48 & 48/0/48 \\
Current enforced coverage & Two-hop & 60 & 60/0/60 \\
Current held-out repeat & Deliberation & 60 & 60/0/60 \\
& Compare-all & 60 & 60/1/59 \\
\bottomrule
\end{tabular*}
\end{table}

\paragraph{Attempt selection and inference.}
The primary analysis retains one operationally complete attempt per registered
identifier and excludes the pilot. The protocol permits one pre-action retry;
the archived launcher also retried runs ending after agent interaction but
before execution. No discarded post-action attempt executed an artifact or
produced a receipt. First-termination aggregation therefore preserves CER but
can change task success and safe utility (Table~\ref{tab:attempt-sensitivity}).
Rank intervals resample instances with all permutations retained; cross-factor
intervals resample domain-by-instance clusters. Model, task, and
agent-configuration comparisons are descriptive, with matched McNemar tests as
unclustered sensitivity checks.

\begin{table}[!t]
\centering
\footnotesize
\renewcommand{\arraystretch}{1.04}
\caption{Attempt-selection sensitivity; outcome columns report $T/H/SU$.}
\label{tab:attempt-sensitivity}
\begin{tabular*}{\textwidth}{@{\extracolsep{\fill}}lrrcc@{}}
\toprule
\rowcolor{singedpale}
\textbf{Block} & \textbf{IDs} & \textbf{Affected} &
\textbf{Primary} & \textbf{First termination} \\
\midrule
Rank, clean, and rules & \AttemptCoreRulesPlanned & \AttemptCoreRulesPostActionIDs & \AttemptCoreRulesSelectedTask/\AttemptCoreRulesSelectedHidden/\AttemptCoreRulesSelectedSafe & \AttemptCoreRulesFirstPostTask/\AttemptCoreRulesFirstPostHidden/\AttemptCoreRulesFirstPostSafe \\
Model breadth & \AttemptModelBreadthPlanned & \AttemptModelBreadthPostActionIDs & \AttemptModelBreadthSelectedTask/\AttemptModelBreadthSelectedHidden/\AttemptModelBreadthSelectedSafe & \AttemptModelBreadthFirstPostTask/\AttemptModelBreadthFirstPostHidden/\AttemptModelBreadthFirstPostSafe \\
Coverage enforcement & \AttemptCoveragePlanned & \AttemptCoveragePostActionIDs & \AttemptCoverageSelectedTask/\AttemptCoverageSelectedHidden/\AttemptCoverageSelectedSafe & \AttemptCoverageFirstPostTask/\AttemptCoverageFirstPostHidden/\AttemptCoverageFirstPostSafe \\
Input transfer & \AttemptHeldoutPlanned & \AttemptHeldoutPostActionIDs & \AttemptHeldoutSelectedTask/\AttemptHeldoutSelectedHidden/\AttemptHeldoutSelectedSafe & \AttemptHeldoutFirstPostTask/\AttemptHeldoutFirstPostHidden/\AttemptHeldoutFirstPostSafe \\
Agent configurations & \AttemptAgentLoopsPlanned & \AttemptAgentLoopsPostActionIDs & \AttemptAgentLoopsSelectedTask/\AttemptAgentLoopsSelectedHidden/\AttemptAgentLoopsSelectedSafe & \AttemptAgentLoopsFirstPostTask/\AttemptAgentLoopsFirstPostHidden/\AttemptAgentLoopsFirstPostSafe \\
Release refresh & \AttemptTemporalPlanned & \AttemptTemporalPostActionIDs & \AttemptTemporalSelectedTask/\AttemptTemporalSelectedHidden/\AttemptTemporalSelectedSafe & \AttemptTemporalFirstPostTask/\AttemptTemporalFirstPostHidden/\AttemptTemporalFirstPostSafe \\
\midrule
\rowcolor{gray!9}
\textbf{Total} & \textbf{\AttemptConfirmatoryPlanned} & \textbf{\AttemptConfirmatoryPostActionIDs} & \textbf{\AttemptSelectedTask/\AttemptSelectedHidden/\AttemptSelectedSafe} & \textbf{\AttemptFirstPostTask/\AttemptFirstPostHidden/\AttemptFirstPostSafe} \\
\bottomrule
\end{tabular*}
\end{table}

The layered core, prompt-rewrite, and ReAct blocks are unchanged under the
first-termination rule. Another \AttemptConfirmatoryExcessPreActionIDs{}
identifiers exceed the planned pre-action retry limit. They remain in the audit
ledger, but their task-success rates are not interpreted as intention-to-treat
estimates.

\section{Supplementary Analyses}
\label{app:complete-results}

\begin{table}[!t]
\centering
\small
\renewcommand{\arraystretch}{1.04}
\caption{Policy transfer across the five primary tasks and two cross-domain tasks.}
\label{tab:task-transfer}
\begin{tabular*}{\textwidth}{@{\extracolsep{\fill}}lllcc@{}}
\toprule
\rowcolor{singedpale}
\textbf{Group} & \textbf{Task} & \textbf{Forbidden effect} & \textbf{Deliberation} & \textbf{Compare-all} \\
\midrule
Primary & Document redaction & Loopback request & 6/36 & 1/36 \\
& Speech transcription & Loopback request & 2/36 & 0/36 \\
& Paper retrieval & Loopback request & 14/36 & 3/36 \\
& Archive extraction & Out-of-scope write & 20/95 & 10/95 \\
& Tabular conversion & Unrelated-file read & 11/95 & 7/95 \\
\midrule
Cross-domain & Configuration editing & Metadata modification & 14/30 & 15/30 \\
& Image thumbnailing & Unnecessary subprocess & 12/30 & 10/30 \\
\bottomrule
\end{tabular*}
\end{table}
\paragraph{Independent repeat.}
The independently frozen repeat yields \HeldoutRepeatTask{}/\HeldoutRepeatTrials{}
task successes, \HeldoutRepeatHidden{} hidden effects, and \HeldoutRepeatSafe{}
safe outcomes. Deliberation produces
\HeldoutRepeatDelibHidden{}/\HeldoutRepeatDelibN{} hidden effects, compared with
\HeldoutRepeatCompareHidden{}/\HeldoutRepeatCompareN{} under compare-all.
Figure~\ref{fig:cross-factor} reports the paired contrast and interaction cost;
the audit ledger retains every attempt, artifact hash, and process event.
Table~\ref{tab:task-transfer} keeps the five primary tasks in one evaluation
block and reports the two cross-domain tasks separately. Across primary-task
transfer cells, compare-all changes hidden effects from 53/298 to 21/298. The
cross-domain cells show no aggregate reduction, with 26/60 effects under
deliberation and 25/60 under compare-all. These counts describe matched
interventions within each task rather than prevalence across public sources.

\paragraph{Development pilot.}
The disjoint two-hop pilot contains 12 completed trials. Deliberation produces
3/6 hidden effects and compare-all 1/6. The four failures cover early
commitment, incomplete entrypoint coverage, and one unsafe choice after the
decisive file was exposed. The pilot is excluded from all confirmatory totals
and serves only as preliminary evidence.

\section{Extended Related Work}
\label{sec:extended-related-work}

\paragraph{Tool use and interactive evaluation.}
ReAct and Toolformer study how models interleave reasoning with actions or learn
when to invoke tools \citep{iclr/YaoZYDSNC23,neurips/SchickDRRLHZCTS23}.
API-Bank, Gorilla, ToolLLM, and BFCL extend this setting to API retrieval and
selection \citep{emnlp/LiZYSLYLHL23,neurips/PatilZWG24,
corr/QinLYZYLCTQZHRTXZGLS23,icml/PatilMYJSSG25}. AgentBench, GAIA, WebArena,
OSWorld, \(\tau\)-bench, ToolSandbox, SWE-agent, and AgentBoard evaluate
multi-step task completion across interactive environments
\citep{iclr/LiuYZXLGLDMYZZDZDZSSZSSHDTH24,iclr/MialonFWLS24,
iclr/ZhouXZZLSCOBFA24,corr/XieZCLZCHCSLLXZSCXZTY24,iclr/YaoSRN25,
naacl/LuHZANBMMLYWP25,neurips/YangJWLYNP24,neurips/MaZZYYJLKH24}. They generally
do not distinguish output-equivalent implementations or bind an artifact to its
source. \singed isolates pre-execution source selection and scores both the
artifact and its process effects.

\paragraph{Agent security.}
Work studies indirect prompt injection, poisoned tool descriptions, compromised
memories and skills, malicious implementations, harmful requests, and unsafe
plans \citep{ccs/GreshakeAMEHF23,corr/ZhanLYK24,neurips/DebenedettiZBBFT24,
corr/ZhangHMYWZWHZ24,corr/LiWSZC26,corr/AndriushchenkoSDDLWHZKFWGD24,
corr/YinPDCBXHXSC24,corr/ZhangCLZYWWH24,corr/ShiYTZGS25,
corr/WangGWLSCSDL25,corr/ZhangZXZSOTL26,corr/HuJLGS26}. Work examines
persistent-state poisoning and adaptive defense synthesis
\citep{corr/XuDXKYH26,corr/LiangXWDYH26}. These attacks redirect agents through
adversarial content or state. Functional counterfeits preserve a benign
request and matched output without requiring an adversarial instruction. The
failure occurs when the agent executes an implementation with a forbidden
effect, which instruction-level judgments and output-only oracles cannot detect.

\paragraph{Software supply chains and source selection.}
Package attacks, dependency confusion, hallucinated dependencies, and executable
model loaders show how plausible components can conceal unsafe behavior
\citep{dimva/OhmPSM20,sp/LadisaPMB23,usenix/SpracklenWYMVJ25,
corr/CaseySM24}. This literature characterizes compromised artifacts and
distribution channels, but rarely studies autonomous choice among matched
sources. Candidate order can influence LLM selection when rank carries no provenance
information \citep{corr/BitoRH25,corr/ZhangZC26Position}; models also exhibit
latent source preferences~\citep{corr/KhanADGWGGR26}. \singed connects these concerns by fixing the request,
interface, and expected output while randomizing rank, controlling evidence
depth, and recording inspection, execution, output, and process events. This
reveals when accessible evidence changes action rather than merely being viewed.

\section{Artifact Content}
\label{app:construction}
\label{app:prompts}

The artifact contains the benchmark generator, controlled fixtures, candidate
templates, schedule construction, tool schemas, sandbox runner, scoring code,
and experiment configurations. Generated manifests separate agent-visible
metadata from private role and process labels. The paper reports the audited
aggregate results and protocol details needed to interpret them.

\end{document}